\newcommand{\case}[1]{\textit{Case $\mathit{#1}$.}\ }
 \author{R.R. Kamalian}
\title{It was not known about simple cycles}
\begin{document}

\maketitle

\begin{abstract}
A proper edge $t$-coloring of a graph is a coloring of its edges
with colors $1,2,...,t$ such that all colors are used, and no two
adjacent edges receive the same color. For any integer $n\geq 3$,
all possible values of $t$ are found, for which there exists such a
proper edge $t$-coloring of the simple cycle $C(n)$, which uses for
each pair of adjacent edges either consecutive colors or the first
and the last ones.

\bigskip
Keywords: proper edge coloring, cyclically interval coloring, simple
cycle.
\end{abstract}

We consider undirected, simple, finite and connected graphs. For a
graph $G$, we denote by $V(G)$ and $E(G)$ the sets of its vertices
and edges, respectively. The set of edges of $G$ incident with a
vertex $x\in V(G)$ is denoted by $J_G(x)$. For any $x\in V(G)$,
$d_G(x)$ denotes the degree of the vertex $x$ in $G$. For a graph
$G$, $\Delta(G)$ denotes the maximum degree of a vertex of $G$. A
simple cycle with $n$ edges $(n\geq 3)$ is denoted by $C(n)$. A
simple path with $n$ edges $(n\geq 1)$ is denoted by $P(n)$.

For an arbitrary finite set $A$, we denote by $|A|$ the number of
elements of $A$. The set of positive integers is denoted by
$\mathbb{N}$. For any subset $D$ of the set $\mathbb{N}$, we denote
by $D_{(0)}$ and $D_{(1)}$ the subsets of all even and all odd
elements of $D$, respectively.

An arbitrary nonempty subset of consecutive integers is called an
interval. An interval with the minimum element $p$ and the maximum
element $q$ is denoted by $[p,q]$. An interval $D$ is called a
$h$-interval if $|D|=h$.

For any positive integer $k$ define $\varepsilon(k)\equiv
1+\lfloor\frac{k}{2}\rfloor - \lceil\frac{k}{2}\rceil$.

For any nonnegative integer $k$ define
$$
sgn(k)\equiv\left\{
\begin{array}{ll}
0, & \textrm{if $k=0$}\\
1 & \textrm{otherwise.}\\
\end{array}
\right.
$$

A function $\varphi:E(G)\rightarrow [1,t]$ is called a proper edge
$t$-coloring of a graph $G$, if all colors are used, and no two
adjacent edges receive the same color.

The minimum value of $t$ for which there exists a proper edge
$t$-coloring of a graph $G$ is denoted by $\chi'(G)$ \cite{Vizing2}.

If $G$ is a graph, and $\varphi$ is its proper edge $t$-coloring,
where $t\in[\chi'(G),|E(G)|]$, then we define
$U(G,\varphi)\equiv\{e\in E(G)/1<\varphi(e)<t\}$.

If $E_0\subseteq E(G)$, $t\in[\chi'(G),|E(G)|]$, and $\varphi$ is a
proper edge $t$-coloring of a graph $G$, then we set
$\varphi[E_0]\equiv\{\varphi(e)/e\in E_0\}$.

A proper edge $t$-coloring ($t\in[\chi'(G),|E(G)|]$) $\varphi$ of a
graph $G$ is called an interval $t$-coloring of $G$ \cite{Oranj3,
Asratian4, Diss5} if for any $x\in V(G)$, the set $\varphi[J_G(x)]$
is a $d_G(x)$-interval. For any $t\in \mathbb{N}$, we denote by
$\mathfrak{N}_t$ the set of graphs for which there exists an
interval $t$-coloring. Let
$$
\mathfrak{N}=\bigcup_{t\geq 1}\mathfrak{N}_t.
$$

For any $G\in\mathfrak{N}$, we denote by $w_{int}(G)$ and
$W_{int}(G)$ the minimum and the maximum possible value of $t$,
respectively, for which $G\in\mathfrak{N}_t$. For a graph $G$, let
us set $\theta(G)\equiv\{t\in \mathbb{N}/G\in\mathfrak{N}_t\}$.

A proper edge $t$-coloring ($t\in[\chi'(G),|E(G)|]$) $\varphi$ of a
graph $G$ is called a cyclically interval $t$-coloring of $G$, if
for any $x\in V(G)$, at least one of the following two conditions
holds:
\begin{enumerate}
  \item $\varphi[J_G(x)]$ is a $d_G(x)$-interval,
  \item $[1,t]\backslash\varphi[J_G(x)]$ is a $(t-d_G(x))$-interval.
\end{enumerate}

For any $t\in\mathbb{N}$, we denote by $\mathfrak{M}_t$ the set of
graphs for which there exists a cyclically interval $t$-coloring.
Let
$$
\mathfrak{M}\equiv\bigcup_{t\geq 1}\mathfrak{M}_t.
$$

For any $G\in\mathfrak{M}$, we denote by $w_{cyc}(G)$ and
$W_{cyc}(G)$ the minimum and the maximum possible value of $t$,
respectively, for which $G\in\mathfrak{M}_t$. For a graph $G$, let
us set $\Theta(G)\equiv\{t\in \mathbb{N}/G\in\mathfrak{M}_t\}$.

It is clear that for any $G\in\mathfrak{N}$, an arbitrary interval
$t$-coloring ($t\in\theta(G)$) of a graph $G$ is also a cyclically
interval $t$-coloring of $G$. Thus, for any $t\in \mathbb{N}$,
$\mathfrak{N}_t\subseteq\mathfrak{M}_t$ and
$\mathfrak{N}\subseteq\mathfrak{M}$. Let us also note that for an
arbitrary graph $G$, $\theta(G)\subseteq\Theta(G)$. It is also clear
that for any $G\in\mathfrak{N}$, the following inequality is true:
$$
\Delta(G)\leq\chi'(G)\leq w_{cyc}(G)\leq w_{int}(G)\leq
W_{int}(G)\leq W_{cyc}(G)\leq |E(G)|.
$$

In \cite{Diss5, Preprint6}, for any tree $G$, it is proved that
$G\in\mathfrak{N}$, $\theta(G)$ is an interval, and the exact values
of the parameters $w_{int}(G)$, $W_{int}(G)$ are found. In
\cite{Shved1_7}, for any tree $G$, it is proved that
$\Theta(G)=\theta(G)$. Some interesting results on cyclically
interval $t$-colorings and related topics were obtained in
\cite{DeWerra6, DeWerra7, Barth8, Kotzig, Daus9, Csit10}.

In this paper, for any integer $n\geq 3$, it is proved that
$C(n)\in\mathfrak{M}$, and the set $\Theta(C(n))$ is found.

\begin{rem}
Clearly, for any integer $n\geq 3$, $\chi'(C(n))=3-\varepsilon(n)$,
$|E(C(n))|=n$. Therefore, if $t\not\in[3-\varepsilon(n),n]$, then a
proper edge $t$-coloring of $C(n)$ does not exist, and
$C(n)\not\in\mathfrak{N}_t$.
\end{rem}

\begin{rem}\label{rem2}
It is not difficult to see that for any integer $k\geq 2$,
$C(2k)\in\mathfrak{N}$ and $\theta(C(2k))=[2,k+1]$.
\end{rem}

\begin{prop}
For any integer $n\geq 3$, $C(n)\in\mathfrak{M}$,
$n\in\Theta(C(n))$. $\Theta(C(3))=\{3\}$. $\Theta(C(4))=\{2,3,4\}$.
\end{prop}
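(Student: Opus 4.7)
The plan is to handle all three assertions by producing explicit colorings and invoking the preliminary remarks for the upper bounds on admissible~$t$.

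\medskip

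\textbf{Step 1: Rainbow coloring giving $n\in\Theta(C(n))$.} I would label the consecutive edges of $C(n)$ as $e_1,\ldots,e_n$ and define $\varphi(e_i):=i$. This is clearly a proper edge $n$-coloring in which every color from $[1,n]$ is used. To verify the cyclically interval condition, I would split the $n$ vertices into two kinds. For each of the $n-1$ vertices incident to some pair $\{e_i,e_{i+1}\}$ with $1\leq i\leq n-1$, the set $\varphi[J_{C(n)}(x)]=\{i,i+1\}$ is a $2$-interval, so condition~(1) holds. For the unique vertex incident to $\{e_n,e_1\}$, $\varphi[J_{C(n)}(x)]=\{1,n\}$, and $[1,n]\setminus\{1,n\}=[2,n-1]$ is an $(n-2)$-interval (here using $n\geq 3$), so condition~(2) holds. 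Thus $C(n)\in\mathfrak{M}_n\subseteq\mathfrak{M}$ and $n\in\Theta(C(n))$.

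\medskip

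\textbf{Step 2: $\Theta(C(3))=\{3\}$.} Since $C(3)=K_3$, we have $\chi'(C(3))=3$ and $|E(C(3))|=3$, so the interval $[\chi'(C(3)),|E(C(3))|]$ to which every cyclically interval $t$-coloring's parameter $t$ must belong is the singleton $\{3\}$. Combined with Step~1 applied to $n=3$, this gives $\Theta(C(3))=\{3\}$.

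\medskip

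\textbf{Step 3: $\Theta(C(4))=\{2,3,4\}$.} Here $\chi'(C(4))=2$ and $|E(C(4))|=4$, so $\Theta(C(4))\subseteq\{2,3,4\}$. I would exhibit one coloring for each value: for $t=2$, the alternating coloring $(1,2,1,2)$ around the cycle, which is an interval $2$-coloring since each vertex sees $\{1,2\}$; for $t=3$, the coloring $(1,2,3,2)$, which is proper and each vertex sees either $\{1,2\}$ or $\{2,3\}$, each a $2$-interval; for $t=4$, Step~1 with $n=4$. In each case the coloring satisfies the cyclically interval condition at every vertex, establishing $\{2,3,4\}\subseteq\Theta(C(4))$.

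\medskip

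The argument requires no real obstacle; the only delicate point is recognizing that the bound $t\leq|E(G)|$ in the membership requirement is what forces $\Theta(C(3))=\{3\}$, and that for the rainbow coloring on $C(n)$ the wrap-around vertex is saved precisely by condition~(2), not condition~(1), of the cyclically interval definition.
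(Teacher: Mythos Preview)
Your proof is correct and is precisely the explicit verification that the paper omits by declaring the proof trivial; your rainbow coloring, the $\chi'$--$|E|$ bounds from the preliminary remarks, and the three small explicit colorings for $C(4)$ are exactly what underlies that one-line dismissal.
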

Proof is trivial.

\begin{thm}
For any integers $n$ and $t$, satisfying the conditions $n\geq 5$
and $t\in[3-\varepsilon(n),n]$, $C(n)\not\in\mathfrak{M}_t$ if and
only if
$t\in[4+\varepsilon(n)\cdot(\frac{n}{2}+\varepsilon(\lfloor\frac{n}{2}\rfloor)-2),n-1]_{(\varepsilon(n))}$.
\end{thm}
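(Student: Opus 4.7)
The plan is to translate existence of a cyclically interval $t$-coloring of $C(n)$ into a combinatorial question about closed walks on a cycle. If the edges of $C(n)$ are $e_1,\ldots,e_n$ in cyclic order and $c_i:=\varphi(e_i)$, then the two alternatives in the definition of cyclically interval at each vertex say exactly that the pair $\{c_i,c_{i+1}\}$ (indices mod $n$) is an edge of the auxiliary cycle $\widetilde C_t$ on $[1,t]$ with edges $\{i,i+1\}$ ($1\le i<t$) and the wrap-edge $\{1,t\}$. Since all colors must occur, $C(n)\in\mathfrak M_t$ is equivalent to the existence of a closed walk of length $n$ on $\widetilde C_t$ visiting every vertex. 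I would parametrise such a walk by its numbers $a,b$ of forward and backward steps, so $a+b=n$ and $a-b=kt$ for some integer $k$ (the winding number), and work with the lift to $\mathbb Z$: its set of visited integers is a contiguous interval $[m,M]$, which projects onto all of $\widetilde C_t$ iff $M-m\ge t-1$.

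For the non-existence direction I would isolate two obstructions. \emph{Parity:} since $a-b$ and $a+b$ have the same parity, $kt\equiv n\pmod 2$; when $n$ is odd and $t$ is even this is unsatisfiable for every $k$, so no walk (hence no coloring) exists. \emph{Range:} when $n$ is even and $t$ odd with $t\ge n/2+2$, the requirement $kt\equiv 0\pmod 2$ forces $k$ even, and $|k|t\le n<2t$ forces $k=0$; hence $a=b=n/2$, and between the indices where the lifted walk attains its maximum and its minimum one counts at least $M-m$ steps of one fixed sign, giving $M-m\le n/2\le t-2$ and thus a missed color. Tracking carefully the boundary case $t=n/2+1$, which is odd iff $4\mid n$ (equivalently iff $\varepsilon(\lfloor n/2\rfloor)=1$) and is then realised by an up-and-down walk of range exactly $n/2$, identifies the bad set as the interval asserted in the theorem.

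For the existence direction I would exhibit explicit colorings in three templates that together cover every good $t\in[3-\varepsilon(n),n]$. Template (i), $t=n$: the monotone sequence $c_i=i$, closed by the wrap-edge $\{1,n\}$. Template (ii), $t\le n-2$ with $t$ of the same parity as $n$: set $c_i=i$ for $1\le i\le t$ and then alternate $c_{t+1}=t-1,\;c_{t+2}=t,\;c_{t+3}=t-1,\ldots$ up to position $n$; the parity of $n-t$ forces $c_n=t$, so the wrap-edge $\{t,1\}$ closes the walk. Template (iii), $n$ even, $t$ odd, $t\le n/2+1$: when $t\le n/2$ use the doubly-winding walk with $(a,b)=(n/2+t,\,n/2-t)$ arranged as an ascent of length $(n+t)/2$ followed by a descent of length $(n-t)/2$, and when $t=n/2+1$ (which forces $4\mid n$) use the up-and-down walk $1,2,\ldots,t,t-1,\ldots,2$. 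In every case direct inspection confirms that all colors are used and all cyclic adjacencies are valid. The main obstacle is the range estimate underlying the non-existence proof; once that is in hand, the parity bookkeeping and the explicit constructions are routine.
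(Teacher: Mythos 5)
Your proposal is correct, and it takes a genuinely different route from the paper. You first observe that on a cycle every vertex has degree $2$, so the cyclically interval condition degenerates to: each pair of consecutive colors $\{c_i,c_{i+1}\}$ is either a $2$-interval or $\{1,t\}$; this makes $C(n)\in\mathfrak{M}_t$ equivalent to the existence of a surjective closed walk of length $n$ on the color cycle $\widetilde C_t$, and then everything is winding-number bookkeeping: bipartiteness of $\widetilde C_t$ for even $t$ kills all even $t$ when $n$ is odd, and for $n$ even, $t$ odd, $t\geq \frac{n}{2}+2$, parity forces winding number $0$, whence your max-min estimate $M-m\leq \frac{n}{2}\leq t-2$ shows a color is missed --- this single range lemma replaces the paper's entire Cases A and B, where the author instead deletes the edges of intermediate colors, studies the components of the subgraph of edges colored $1$ or $t_0$, encodes their boundary data in an auxiliary cycle $\widetilde H\cong C(2m)$ with a weight function $\psi$, and extracts parity contradictions ($n_0+2m$ as a sum of terms of known parities) together with length estimates of the form $|E(H'_i)|\geq 2t_0-3$ from interval colorings of paths. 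Your existence templates also differ in one place: the paper realizes the small odd values $t\leq\frac{n}{2}+1$ (for $n$ even) through genuine interval colorings via Remark 2 and the splicing of an interval $t_0$-colored path with a $2$-colored path, while you use a doubly winding walk with $(a,b)=(\frac{n}{2}+t,\frac{n}{2}-t)$; both work. What your approach buys is a unified, conceptual proof in which the two obstructions (parity and range) are transparent and the case analysis nearly disappears; what the paper's approach buys is an argument phrased entirely in coloring language, at the cost of the elaborate component/auxiliary-cycle machinery. One slip to fix: in template (iii) you describe the winding-$2$ walk as ``an ascent of length $(n+t)/2$ followed by a descent of length $(n-t)/2$,'' but these are not integers for $n$ even and $t$ odd and contradict your own correct $(a,b)$; the lengths should be $\frac{n}{2}+t$ and $\frac{n}{2}-t$. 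You should also state explicitly the small computation behind the first sentence, namely that $[1,t]\setminus\{a,b\}$ is a $(t-2)$-interval exactly for the pairs $\{1,2\}$, $\{t-1,t\}$, $\{1,t\}$, so that the disjunction of the two defining conditions is precisely ``consecutive or $\{1,t\}$.''
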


\begin{proof}
First let us prove, that if $n\in \mathbb{N}$, $n\geq 5$ and
$t\in[4+\varepsilon(n)\cdot(\frac{n}{2}+\varepsilon(\lfloor\frac{n}{2}\rfloor)-2),n-1]_{(\varepsilon(n))}$,
then $C(n)\not\in\mathfrak{M}_t$.

Assume the contrary: there are $n_0\in \mathbb{N}$, $n_0\geq 5$ and
$t_0\in[4+\varepsilon(n_0)\cdot(\frac{n_0}{2}+\varepsilon(\lfloor\frac{n_0}{2}\rfloor)-2),n_0-1]_{(\varepsilon(n_0))}$,
for which a cyclically interval $t_0$-coloring $\alpha$ of the graph
$C(n_0)$ exists.

Let us construct a graph $H_{00}$ removing from the graph $C(n_0)$
the subset $U(C(n_0),\alpha)$ of its edges. Let us construct a graph
$H_0$ removing from the graph $H_{00}$ all its isolated vertices.

\case{A} $H_0$ is a connected graph.

Let us denote by $F$ the simple path with pendant edges $e'$ and
$e''$ which is isomorphic to the graph $P(n_0-|E(H_0)|+2)$.

\case{A.1} $n_0$ is odd.

Clearly, $t_0\in[4,n_0-1]_{(0)}$. It means that $t_0$ is an even
number, satisfying the inequality $4\leq t_0\leq n_0-1$.

\case{A.1.1} $|E(H_0)|$ is odd.

Clearly, $|E(H_0)|\geq 3$. Since $\alpha$ is a cyclically interval
$t_0$-coloring of $C(n_0)$, we conclude from the definition of
$H_0$, that for a graph $F$, there exists an interval
$(t_0-1)$-coloring $\beta_1$ with $\beta_1(e')=\beta_1(e'')$.
Consequently, the number $n_0-|E(H_0)|+2$ is odd, what contradicts
the same parity of $n_0$ and $|E(H_0)|$.

\case{A.1.2} $|E(H_0)|$ is even.

Clearly, $|E(H_0)|\geq 2$. Since $\alpha$ is a cyclically interval
$t_0$-coloring of $C(n_0)$, we conclude from the definition of
$H_0$, that for a graph $F$, there exists an interval $t_0$-coloring
$\beta_2$ with $\beta_2(e')=1$ and $\beta_2(e'')=t_0$. Consequently,
the number $n_0-|E(H_0)|+2$ is even, what contradicts the different
parity of $n_0$ and $|E(H_0)|$.

\case{A.2} $n_0$ is even.

Clearly,
$t_0\in[\frac{n_0}{2}+2+\varepsilon(\frac{n_0}{2}),n_0-1]_{(1)}$. It
means that $t_0$ is an odd number, satisfying the inequality
$\frac{n_0}{2}+2+\varepsilon(\frac{n_0}{2})\leq t_0\leq n_0-1$.

\case{A.2.1} $|E(H_0)|$ is odd.

Clearly, $|E(H_0)|\geq 3$. Since $\alpha$ is a cyclically interval
$t_0$-coloring of $C(n_0)$, we can conclude from the definition of
$H_0$, that for a graph $F$, there exists an interval
$(t_0-1)$-coloring $\beta_3$ with $\beta_3(e')=\beta_3(e'')$.
Consequently, $n_0>n_0-|E(H_0)|+2=|E(F)|\geq 2t_0-3\geq
n_0+1+2\cdot\varepsilon(\frac{n_0}{2})>n_0$, which is impossible.

\case{A.2.2} $|E(H_0)|$ is even.

Clearly, $|E(H_0)|\geq 2$. Since $\alpha$ is a cyclically interval
$t_0$-coloring of $C(n_0)$, we can conclude from the definition of
$H_0$, that for a graph $F$, there exists an interval $t_0$-coloring
$\beta_4$ with $\beta_4(e')=1$ and $\beta_4(e'')=t_0$. Since $t_0$
is odd, the number $n_0-|E(H_0)|+2$ is also odd, but it is
impossible because of the same parity of $n_0$ and $|E(H_0)|$.

\case{B} $H_0$ is a graph with $m$ connected components, $m\geq 2$.

Assume that:
\begin{enumerate}
  \item $H_1,\ldots,H_m$ are connected components of $H_0$ numbered in
  succession at bypassing of the graph $C(n_0)$ in some fixed direction,
  \item $v_1,\ldots,v_{n_0}$ are vertices of $C(n_0)$ numbered in
  succession at bypassing mentioned in 1),
  \item $e_1,\ldots,e_{n_0}$ are edges of $C(n_0)$ numbered in
  succession at bypassing mentioned in 1),
  \item $v_1\in V(H_1)$, $v_2\in V(H_1)$, $v_{n_0}\not\in V(H_1)$,
  $e_1=(v_1,v_2)$.
\end{enumerate}

Define functions $\zeta:[1,m]\rightarrow[1,n_0-1]$,
$\eta:[1,m]\rightarrow[1,n_0-1]$, $y:[1,2m]\rightarrow\{0,1\}$ as
follows. For any $i\in [1,m]$, set:
$$
\zeta(i)\equiv\min\{k/e_k\in E(H_i)\},\qquad
\eta(i)\equiv\max\{k/e_k\in E(H_i)\}.
$$

For any $j\in[1,2m]$, set
$$
y(j)\equiv\left\{
\begin{array}{ll}
sgn(\alpha(e_{\zeta(\frac{j+1}{2})})-1), & \textrm{if $j$ is odd}\\
sgn(\alpha(e_{\eta(\frac{j}{2})})-1), & \textrm{if $j$ is even.}\\
\end{array}
\right.
$$

Now let us define subgraphs $H'_1,\ldots,H'_m$ of the graph
$C(n_0)$.

For any $i\in[1,m-1]$, let $H'_i$ be the subgraph of $C(n_0)$
induced \cite{West1} by the subset
$\{v_{\eta(i)},v_{\eta(i)+1},\ldots,v_{\zeta(i+1)},v_{\zeta(i+1)+1}\}$
of its vertices. Let $H'_m$ be the subgraph of $C(n_0)$ induced by
the subset $\{v_{\eta(m)},v_{\eta(m)+1},\ldots,v_{n_0},v_1,v_2\}$ of
its vertices.

Let
$$
M_1\equiv\{i\in[1,m]/1\in\alpha[E(H'_i)]\},\qquad
M_2\equiv\{i\in[1,m]/t_0\in\alpha[E(H'_i)]\}.
$$

For any $j\in[1,2m]$, we define a point $\pi_j$ of the
$2$-dimensional rectangle coordinate system by the following way:
$\pi_j\equiv(j,y(j))$.

Let us define a graph $\widetilde{H}$. Set
$V(\widetilde{H})\equiv\{\pi_1,\ldots,\pi_{2m}\}$,
$E(\widetilde{H})\equiv\{(\pi_{2m},\pi_1)\}\cup\{(\pi_j,\pi_{j+1})/j\in[1,2m-1]\}$.
Clearly, $\widetilde{H}\cong C(2m)$.

Let $E_1(\widetilde{H})\equiv\{(\pi_{2q-1},\pi_{2q})/q\in[1,m]\}$,
$E_2(\widetilde{H})\equiv E(\widetilde{H})\backslash
E_1(\widetilde{H})$.

An edge $(\pi',\pi'')$ of the graph $\widetilde{H}$ is called
horizontal if the points $\pi'$ and $\pi''$ have the same ordinate.

Let us denote by $E_{\_}(\widetilde{H})$ the set of all horizontal
edges of the graph $\widetilde{H}$. Set $E_{|}(\widetilde{H})\equiv
E(\widetilde{H})\backslash E_{\_}(\widetilde{H})$. It is easy to
note that the numbers $|E_{\_}(\widetilde{H})|$ and
$|E_{|}(\widetilde{H})|$ are both even.

Now let us define a function $\psi: E(\widetilde{H})\rightarrow
[1,n_0-1]$ by the following way. For an arbitrary $e\in
E(\widetilde{H})$ set:
$$
\psi(e)\equiv\left\{
\begin{array}{ll}
|E(H_q)|, & \textrm{if $e=(\pi_{2q-1},\pi_{2q})$, where $q\in[1,m]$}\\
|E(H'_q)|, & \textrm{if $e=(\pi_{2q},\pi_{2q+1})$, where $q\in[1,m-1]$}\\
|E(H'_m)|, & \textrm{if $e=(\pi_{2m},\pi_1)$.}
\end{array}
\right.
$$

Clearly,
$$
\sum_{e\in E(\widetilde{H})}\psi(e)=n_0+2m.
$$

\case{B.1} $n_0$ is odd.

Clearly, $t_0\in[4,n_0-1]_{(0)}$. It means that $t_0$ is an even
number, satisfying the inequality $4\leq t_0\leq n_0-1$. It is not
difficult to see that in this case, for an arbitrary $e\in
E_{\_}(\widetilde{H})$, $\psi(e)$ is odd, and, moreover, for an
arbitrary $e\in E_{|}(\widetilde{H})$, $\psi(e)$ is even. Since
$|E_{\_}(\widetilde{H})|$ is even, we conclude that the odd number
$\;n_0+2m=\sum_{e\in E_{\_}(\widetilde{H})}\psi(e)+\sum_{e\in
E_{|}(\widetilde{H})}\psi(e)\;$ is represented as a sum of two even
numbers, which is impossible.

\case{B.2} $n_0$ is even.

Clearly,
$t_0\in[\frac{n_0}{2}+2+\varepsilon(\frac{n_0}{2}),n_0-1]_{(1)}$. It
means that $t_0$ is an odd number, satisfying the inequality
$\frac{n_0}{2}+2+\varepsilon(\frac{n_0}{2})\leq t_0\leq n_0-1$. It
is not difficult to see that in this case, for an arbitrary $e\in
E_2(\widetilde{H})\cup E_{\_}(\widetilde{H})$, $\psi(e)$ is odd,
and, moreover, for an arbitrary $e\in E_1(\widetilde{H})\cap
E_{|}(\widetilde{H})$, $\psi(e)$ is even.

\case{B.2.1} $|E_2(\widetilde{H})\cap E_{|}(\widetilde{H})|\geq 2$.

In this case, evidently, there are different integers $i'$ and $i''$
in the set $[1,m]$, for which there exist interval $t_0$-colorings
$\beta'$ and $\beta''$ of the graphs $H'_{i'}$ and $H'_{i''}$,
respectively. Consequently, $n_0=|E(C(n_0))|\geq|E(H'_{i'})\cup
E(H'_{i''})|=|E(H'_{i'})|+|E(H'_{i''})|-|E(H'_{i'})\cap
E(H'_{i''})|\geq|E(H'_{i'})|+|E(H'_{i''})|-2\geq 2t_0-2\geq
n_0+2+2\varepsilon(\frac{n_0}{2})>n_0$, which is impossible.

\case{B.2.2} $|E_2(\widetilde{H})\cap E_{|}(\widetilde{H})|=1$.

Without loss of generality assume that $E_2(\widetilde{H})\cap
E_{|}(\widetilde{H})=\{e^0\}$. Since $|E_{\_}(\widetilde{H})|$ is
even, we conclude that the even number $\;n_0+2m=\sum_{e\in
E_{\_}(\widetilde{H})}\psi(e)+\sum_{e\in
E_{|}(\widetilde{H})}\psi(e)=\sum_{e\in E_2(\widetilde{H})\cap
E_{|}(\widetilde{H})}\psi(e)+\sum_{e\in E_1(\widetilde{H})\cap
E_{|}(\widetilde{H})}\psi(e)+\sum_{e\in
E_{\_}(\widetilde{H})}\psi(e)=\psi(e^0)+\sum_{e\in
E_1(\widetilde{H})\cap E_{|}(\widetilde{H})}\psi(e)+\sum_{e\in
E_{\_}(\widetilde{H})}\psi(e)\;$ is represented as a sum of one odd
and two even numbers, which is impossible.

\case{B.2.3} $|E_2(\widetilde{H})\cap E_{|}(\widetilde{H})|=0$.

Clearly, for any $i\in[1,m]$, the set $\alpha[E(H'_i)]$ contains
exactly one of the colors 1 and $t_0$.

\case{B.2.3.a)} $M_1\neq\emptyset$, $M_2=\emptyset$.

It is not difficult to see that in this case there is $i_1\in M_1$,
for which the set $\alpha[E(H'_{i_1})]$ contains the color $t_0-1$.
It means that there exists an interval $(t_0-1)$-coloring of the
graph $H'_{i_1}$ which colors pendant edges of $H'_{i_1}$ by the
color 1. Consequently, $n_0>|E(H'_{i_1})|\geq 2t_0-3\geq
n_0+1+2\varepsilon(\frac{n_0}{2})>n_0$, which is impossible.

\case{B.2.3.b)} $M_1=\emptyset$, $M_2\neq\emptyset$.

It is not difficult to see that in this case there is $i_2\in M_2$,
for which the set $\alpha[E(H'_{i_2})]$ contains the color 2. It
means that there exists an interval $(t_0-1)$-coloring of the graph
$H'_{i_2}$ which colors pendant edges of $H'_{i_2}$ by the color 1.
Consequently, $n_0>|E(H'_{i_2})|\geq 2t_0-3\geq
n_0+1+2\varepsilon(\frac{n_0}{2})>n_0$, which is impossible.

\case{B.2.3.c)} $M_1\neq\emptyset$, $M_2\neq\emptyset$.

Let us choose $i_3\in M_1$ and $i_4\in M_2$ satisfying the
conditions $|\alpha[E(H'_{i_3})]|=\max_{i\in M_1}|\alpha[E(H'_i)]|$,
$|\alpha[E(H'_{i_4})]|=\max_{i\in M_2}|\alpha[E(H'_i)]|$. Let
$j^{(3)}$ be the maximum color of the set $\alpha[E(H'_{i_3})]$. Let
$j^{(4)}$ be the minimum color of the set $\alpha[E(H'_{i_4})]$.
Clearly, $j^{(3)}\geq j^{(4)}-1$.

It is not difficult to see that there exists an interval
$j^{(3)}$-coloring of the graph $H'_{i_3}$ which colors pendant
edges of $H'_{i_3}$ by the color 1. Hence, $|E(H'_{i_3})|\geq
2j^{(3)}-1$.

It is not difficult to see that there exists an interval
$(t_0-j^{(4)}+1)$-coloring of the graph $H'_{i_4}$ which colors
pendant edges of $H'_{i_4}$ by the color 1. Hence,
$|E(H'_{i_4})|\geq 2\cdot(t_0-j^{(4)}+1)-1=2t_0-2j^{(4)}+1$.

Consequently, we obtain that $n_0>|E(H'_{i_3})\cup
E(H'_{i_4})|=|E(H'_{i_3})|+|E(H'_{i_4})|\geq
2t_0+2(j^{(3)}-j^{(4)})\geq 2t_0-2\geq
n_0+2+2\varepsilon(\frac{n_0}{2})>n_0$, which is impossible.

Thus, we have proved that if $n\in \mathbb{N}$, $n\geq 5$ and
$t\in[4+\varepsilon(n)\cdot(\frac{n}{2}+\varepsilon(\lfloor\frac{n}{2}\rfloor)-2),n-1]_{(\varepsilon(n))}$,
then $C(n)\not\in\mathfrak{M}_t$.

Now let us prove that if $n\in \mathbb{N}$, $n\geq 5$,
$t\in[3-\varepsilon(n),n]$, $C(n)\not\in\mathfrak{M}_t$, then
$t\in[4+\varepsilon(n)\cdot(\frac{n}{2}+\varepsilon(\lfloor\frac{n}{2}\rfloor)-2),n-1]_{(\varepsilon(n))}$.

Assume the contrary. It means that there are $n_0\in \mathbb{N}$,
$n_0\geq 5$, and $t_0\in[3-\varepsilon(n_0),n_0]$, which satisfy the
conditions $C(n_0)\not\in\mathfrak{M}_{t_0}$ and
$t_0\not\in[4+\varepsilon(n_0)\cdot(\frac{n_0}{2}+\varepsilon(\lfloor\frac{n_0}{2}\rfloor)-2),n_0-1]_{(\varepsilon(n_0))}$.

\case{1} $n_0$ is odd.

In this case $t_0\in[3,n_0]$ and $t_0\not\in[4,n_0-1]_{(0)}$, and,
therefore, $t_0\in[3,n_0]_{(1)}$. It means that there exists $m_0\in
\mathbb{N}$, for which $2\leq
m_0=\frac{t_0+1}{2}\leq\frac{n_0+1}{2}$. Let us note that the
equality $m_0=\frac{n_0+1}{2}$ implies $t_0=n_0$, which is
incompatible with the condition $C(n_0)\not\in\mathfrak{M}_{t_0}$.
Hence, $n_0-2m_0\geq1$.

Now, to see a contradiction, it is enough to note that existence of
an interval $t_0$-coloring of a graph $P(2m_0-1)$ with existence of
an interval 2-coloring of a graph $P(n_0-2m_0+1)$ provides existence
of a cyclically interval $t_0$-coloring of the graph $C(n_0)$.

\case{2} $n_0$ is even.

In this case $t_0\in[2,n_0]$ and
$t_0\not\in[\frac{n_0}{2}+2+\varepsilon(\frac{n_0}{2}),n_0-1]_{(1)}$,
and, therefore,
$t_0\in[2,\frac{n_0}{2}+1]\cup([\frac{n_0}{2}+3-\varepsilon(\frac{n_0}{2}),n_0]_{(0)})$.
It follows from Remark \ref{rem2} that
$t_0\in[\frac{n_0}{2}+3-\varepsilon(\frac{n_0}{2}),n_0]_{(0)}$.

Clearly, there exists $m_0\in \mathbb{N}$, $m_0\leq
\frac{1}{2}(\frac{n_0}{2}+\varepsilon(\frac{n_0}{2})-1)$, for which
$t_0=\frac{n_0}{2}+1-\varepsilon(\frac{n_0}{2})+2m_0$. Let us note
that the equality
$m_0=\frac{1}{2}(\frac{n_0}{2}+\varepsilon(\frac{n_0}{2})-1)$
implies $t_0=n_0$, which is incompatible with the condition
$C(n_0)\not\in\mathfrak{M}_{t_0}$. Hence,
$\frac{n_0}{2}+\varepsilon(\frac{n_0}{2})-1-2m_0$ is an even number,
satisfying the inequality
$\frac{n_0}{2}+\varepsilon(\frac{n_0}{2})-1-2m_0\geq 2$.

Now, to see a contradiction, it is enough to note that existence of
an interval $t_0$-coloring of a graph
$P(\frac{n_0}{2}+1-\varepsilon(\frac{n_0}{2})+2m_0)$ with existence
of an interval 2-coloring of a graph
$P(\frac{n_0}{2}+\varepsilon(\frac{n_0}{2})-1-2m_0)$ provides
existence of a cyclically interval $t_0$-coloring of the graph
$C(n_0)$.

Thus, we have proved, that if $n\in \mathbb{N}$, $n\geq5$,
$t\in[3-\varepsilon(n),n]$, $C(n)\not\in\mathfrak{M}_t$, then
$t\in[4+\varepsilon(n)\cdot(\frac{n}{2}+\varepsilon(\lfloor\frac{n}{2}\rfloor)-2),n-1]_{(\varepsilon(n))}$.

It means that we also have

\begin{thm}
For an arbitrary integer $n\geq 5$,
$$
\Theta(C(n))=\left\{
\begin{array}{ll}
[3,n]_{(1)}, & \textrm{if $n$ is odd}\\
{[2,\frac{n}{2}+1]}\cup([\frac{n}{2}+3-\varepsilon(\frac{n}{2}),n]_{(0)}),
& \textrm{if $n$ is even}.
\end{array}
\right.
$$
\end{thm}

\end{proof}

The author thanks P.A. Petrosyan and N.A. Khachatryan for their
attention to this work.

\end{document}